\title[From P-tV to B-eS for IPL]
{From Proof-theoretic Validity \\[1mm] to Base-extension Semantics for \\[1mm] Intuitionistic Propositional Logic}
\keywords{logic, proof, semantics, proof-theoretic semantics, intuitionistic logic}
\author{Alexander V. Gheorghiu}
 \address{Department of Computer Science\\
 University College London\\
 London WC1E 6BT, UK}
 \email{alexander.gheorghiu.19@ucl.ac.uk}
\author{David J. Pym}
 \address{Institute of Philosophy,
 University of London, 
 London WC1E 7HU, UK \and \newline Department of Computer Science and Department of Philosophy,
 University College London,
 London WC1E 6BT, UK}
 \email{d.pym@ucl.ac.uk}
\newtheorem{Theorem}{Theorem}
\newtheorem{Proposition}[Theorem]{Proposition}
\newtheorem{Lemma}[Theorem]{Lemma}
\newtheorem{Corollary}[Theorem]{Corollary}
\theoremstyle{definition}
\newtheorem{Definition}[Theorem]{Definition}
\newtheorem{Example}[Theorem]{Example}
\newtheorem{Condition}[Theorem]{Condition}
\renewcommand{\emptyset}{\varnothing}
\renewcommand{\phi}{\varphi}
\newcommand{\argument}[1]{\ensuremath{\mathcal{#1}}}
\newcommand{\base}[1]{\ensuremath{\mathscr{#1}}}
\newcommand{\baseB}{\base{B}}
\newcommand{\calculus}[1]{\ensuremath{\mathsf{#1}}}
\newcommand{\system}[1]{\ensuremath{\mathsf{#1}}}
\newcommand{\set}[1]{\ensuremath{\mathbb{#1}}}
\newcommand{\setAtoms}{\set{A}}
\newcommand{\setFormulas}{\set{F}}
\newcommand{\setSubs}{\set{S}}
\newcommand{\proves}{\vdash}
\newcommand{\entails}{\vDash}
\newcommand{\supp}{\Vdash}
\newcommand{\seq}{\mathbin{\triangleright}}
\newcommand{\at}[1]{#1}
\newcommand{\redtostar}{\rightsquigarrow^*}
\newcommand{\redto}{\rightsquigarrow}
\newcommand{\llbracket}{[\![}
\newcommand{\rrbracket}{]\!]}
\newcommand*{\rn}[1]  {\ensuremath{\mathsf{#1}}}
\newcommand*{\irn}[2][]  {\rn{#2_{I#1}}}
\newcommand*{\ern}[2][]  {\rn{#2_{E#1}}}
\newcommand{\ddeduce}[3]{\deduce[#2]{#1}{#3}}
\def\labelandtag#1#2{\begingroup
   \def\@currentlabel{#2}%
   \phantomsection\label{#1}\endgroup
}
\newcommand{\figureline}{\noindent\rule{\linewidth}{0.4pt}}
\begin{document}

\begin{abstract}
    Proof-theoretic semantics (P-tS) is the approach to meaning in logic based on \emph{proof} (as opposed to truth). There are two major approaches to P-tS: proof-theor\-etic validity (P-tV) and base-extension semantics (B-eS). The former is a semantics of arguments, and the latter is a semantics of logical constants.  This paper demonstrates that the B-eS for \emph{intuitionistic propositional logic} (IPL) encapsulates the declarative content of a  version of P-tV based on the elimination rules. This explicates how the B-eS for IPL works, and shows the completeness of this version of P-tV.
\end{abstract}

\maketitle

\section{Introduction} \label{sec:intro} 

One intuition regarding the meaning of logical consequence, $\Gamma \vdash \phi$, is that it holds by virtue of the logical form of $\Gamma$ and $\phi$, rather than their specific content. One way to express this is by considering arbitrary interpretations of the specific content and demonstrating that $\phi$ holds in any situation in which $\Gamma$ holds. This leads to Tarski's interpretation of consequence based on models $\mathfrak{M}$,
\[
\Gamma \vdash \phi \qquad \mbox{iff} \qquad \mbox{for any model $\mathfrak{M}$, if $\mathfrak{M} \vDash \Gamma$, then $\mathfrak{M} \vDash \phi$}
\]
which defines \emph{model-theoretic semantics} (M-tS). Observe that consequence is defined in terms of the transmission of some categorical notion (in this case, truth). Schroeder-Heister \cite{Schroeder2012categorical} has called this the `standard dogma' of semantics. 

As Prawitz~\cite{Prawitz2007} explains, M-tS conflates the meaning of the logical constants with the \emph{meaning of truth}, since logical structure is defined in terms of interpretations. For example, if $T$ is defined as the least set satisfying certain properties, including ‘$\phi \land \psi \in T$ iff $\phi \in T$ and $\psi \in T$’, then no information is gained about $\land$ by saying that it satisfies this relationship. Moreover, M-tS fails to provide a genuinely consequential relationship between $\Gamma$ and $\phi$.

Tennant~\cite{Tennant78entailment} observes that a consequential reading of a consequence judgment $\Gamma \proves \phi$ implies that $\phi$ follows from $\Gamma$ by some valid reasoning. This requires a notion of a \emph{valid argument} that encapsulates the forms of valid reasoning. We must, therefore, explicate the semantic conditions required for an argument that demonstrates
\[
\mbox{$\psi_1, \ldots, \psi_n$; therefore, $\phi$}
\]
to be valid. Following Prawitz~\cite{Prawitz2007}, these semantic conditions ought to be based on the logical structure of $\psi_1, \ldots, \psi_n$ and some fixed laws of thought. 

Consequently, we abandon the denotationalist perspective on logic, on which M-tS rests, where meaning is given relative to interpretation. Instead, we adopt an \emph{inferentialist} perspective, where meaning is given in terms of inferential relationships — see Brandom~\cite{Brandom2000} and Murzi and Steinberger~\cite{Brandom2000}.

In this paper, we work entirely in the setting of \emph{natural deduction} in the sense of Gentzen~\cite{Gentzen}. In inferentialism, even atomic propositions gain meaning through their \emph{inferential roles}. Thus, we use \emph{atomic systems} to define when the atomic propositions hold (as opposed to using models) details are provided in Section~\ref{sec:background}. Heuristically,  atomic systems are sets of natural deduction rules restricted to atomic propositions. This embodies a ‘meaning-as-use’ philosophy. For example, the proposition `Tammy is a vixen' means ‘Tammy is female’ \emph{and} ‘Tammy is a fox’, governed by these rules:
\[
{ 
\begin{array}{c}
\infer{\text{Tammy is  vixen}}{\text{Tammy is a fox} & \text{Tammy is female}} \qquad
\infer{\text{Tammy is female}}{\text{Tammy is a vixen}} \qquad \infer{\text{Tammy is a fox}}{\text{Tammy is a vixen}} \\
\end{array}
}
\]
These rules, from the inferentialist perspective, are understood as supplying the meaning of the proposition. The `and' above is justified by  comparison with the laws governing conjunction $(\land)$ in $\calculus{NJ}$, 
\[
{
\infer{\phi \land \psi}{\phi & \psi} \qquad
\infer{\phi}{\phi \land \psi} \qquad \infer{\psi}{\phi \land \psi} 
}
\] 
There are important philosophical and mathematical ramifications on the structure of atomic system admitted --- see, for example, Sandqvist~\cite{Sandqvist2009CL,Sandqvist2015IL} and Piecha and Schroeder-Heister ~\cite{Schroeder2016atomic,Piecha2017definitional}. 

The area of logic concerning such a consequentialist reading of logic is \emph{proof-theoretic semantics} (P-tS)~\cite{SEP-PtS,francez2015proof,wansing2000idea}. It is the area of semantics concerning \emph{proof} (as opposed to truth), where `proof' means `valid argument' (as opposed to derivation in a fixed system).  This includes both semantics \emph{of} proofs (i.e., validity conditions on `arguments') and semantics \emph{in terms of} proofs (i.e., the meaning of logical constants in terms of consequential relationships). We call the first \emph{proof-theoretic validity} (P-tV) and the second \emph{base-extension semantics} (Be-S). This nomenclature follows from certain traditions in the literature, but both branches concern \emph{validity} and make use of \emph{base-extensions} in doing so.

Details of B-eS pertinent to this paper are provided in Section~\ref{sec:bes}. Heuristically, a B-eS is defined by a support judgment $\supp$, relative to atomic systems $\base{B}$, by clauses for logical constants, with the base case given by derivability --- that is, if $p$ is an atomic proposition,
\[
\supp_{\base{B}} p \qquad \mbox{iff} \qquad \proves_{\base{B}} p 
\]
(where $\proves_{\base{B}} p$ indicates $p$ can be proved from the rules in $\base{B}$). This mirrors satisfaction in M-tS but can differ significantly. In particular, taking the standard clause for disjunction
\[
\supp_{\base{B}} \phi \lor \psi \qquad \mbox{iff} \qquad \supp_{\base{B}} \phi \mbox{ or } \supp_{\base{B}}  \psi 
\]
renders IPL incomplete (see Piecha et al.~\cite{Piecha2015failure,Piecha2016completeness,Piecha2019incompleteness}), unless additional structure is added elsewhere (see, for example, Stafford and Nascimento~\cite{stafford2023,nascimentothesis}). Sandqvist~\cite{Sandqvist2015IL} showed that IPL is sound and complete for a notion of support with an alternative clause,
\[
\supp_{\base{B}} \phi \lor \psi \qquad \mbox{iff} \qquad  \mbox{$\forall \base{C} \supseteq \baseB$ and $\forall \at{p} \in \setAtoms$, if $\phi \supp_{\base{C}} \at{p}$ and  $\psi \supp_{\base{C}} \at{p}$, then $\supp_{\base{C}} \at{p}$}
\]
This paper gives an operational account of this clause in the sense that it explains what it says about arguments for $\phi \lor \psi$.

Given a notion of P-tV, consequence is defined as follows:
\[
\text{$\Gamma \vdash \phi$ \qquad iff \qquad there is a valid argument from $\Gamma$ to $\phi$}
\]
Prawitz~\cite{Prawitz1971ideas} conjectured that the original and most widely studied account of P-tV corresponds to IPL, but this remains an open problem. The aforementioned work by Piecha et al.~\cite{Piecha2015failure,Piecha2016completeness,Piecha2019incompleteness} says that the conjecture fails (i.e., IPL is incomplete with respect to the semantics) when the notion of P-tV is slightly simplified. Stafford~\cite{Stafford2021} has shown that the semantics for Piecha et al~\cite{Piecha2015failure,Piecha2016completeness,Piecha2019incompleteness} corresponds to \emph{general inquisitive logic} --- that is, the intermediate logic(s) that extends IPL with the rule
\[
\infer{(H \to \phi) \lor (H \to \psi)}{H \to (\phi \lor \psi)}
\]
where $H$ is a hereditary Harrop formula (see Miller~\cite{miller1989logical}). While this rule is admissible in IPL, it is not derivable --- see Harrop~\cite{harrop1960concerning}.

In this paper, we consider P-tV in the Dummett-Prawitz tradition.  A key motivation lies in the following remarks by Gentzen~\cite{Gentzen}:
\begin{quote}
    The introductions represent, as it were, the `definitions' of the symbols concerned, and the eliminations are no more, in the final analysis, than the consequences of these definitions. This fact may be expressed as follows: In eliminating a symbol, we may use the formula with whose terminal symbol we are dealing only `in the sense afforded it by the introduction of that symbol'
\end{quote}
Prawitz~\cite{Prawitz1971ideas,Prawitz1973towards,Prawitz1974} used his normalization theory for $\calculus{NJ}$ to develop a semantic concept reflecting this intuition. Dummett~\cite{Dummett1991logical} later developed the philosophical underpinnings of the idea. 

The basic idea of P-tV in the Dummett-Prawitz tradition is that arguments are valid by virtue of their form. One begins with some class of canonical proofs relative to which validity is computed. Arguments are valid if they \emph{represent} a canonical proof. Thus, P-tV in the Dummett-Prawitz tradition is based on the following ideas:
\begin{itemize}[label=--]
    \item the priority of canonical proofs
    \item the reduction of closed non-canonical arguments to canonical ones.
    \item the substitutional view of open arguments --- that is,  open arguments are justified by considering their closed instances.
\end{itemize}
We defer to Schroeder-Heister~\cite{Schroeder2006validity} for a formal account of this version of P-tS and its subsequent developments --- see, for example, Prawitz~\cite{Prawitz1972,Prawitz1973towards,Prawitz1974}. This is closely related to the \emph{Brouwer-Heyting-Kolmogorov} (BHK) interpretation of intuitionism --- see Section~\ref{sec:bhk} and Schroeder-Heister~\cite{Schroeder2007modelvsproof}.

Typically, normalized closed arguments are \emph{valid} iff their immediate sub-proofs are valid, prioritizing introduction rules. Normalized derivations in $\calculus{NJ}$ conclude with introduction rules (see Prawitz~\cite{Prawitz2006natural}). Schroeder-Heister~\cite{Schroeder2015elim} proposed an alternative based on elimination rules, drawing ideas by Prawitz~\cite{Prawitz1971ideas}. The logical form of a proposition tells us how we may use it; for example, given an implications proposition, its logical form says no more than this: one may establish the consequent by establishing the antecedent. This is expressed by the law of \emph{modus ponens},
\[
\infer{\phi}{\phi \to \psi & \phi}
\]
More generally, it is the elimination (not introduction) rules that says how one may use a proposition of a certain logical form. This suggests a version of P-tV based on elimination rules. As Schroeder-Heister~\cite{Schroeder2015elim} observes:
\begin{quote}
    The intuition behind the approach based on elimination rules is that a derivation is valid, if the result of the application of each possible elimination rule to its end-formula is valid.
\end{quote}
Thus an argument is no longer valid in virtue of its form or the form to which it can be reduced (as in the introduction-based approach), but rather in virtue of the immediate consequences one can reach starting with this argument. This is a genuinely `consequentialist' view of validity.

Importantly, basing P-tV on the elimination rules does not necessarily mean that one is taking the elimination rules as prior to the introduction rules. Halln\"as and Schroeder-Heister~\cite{hallnas1990proof,hallnas1991proof,schroeder1993rules,hallnas2006proof} have shown the elimination rules arises from the introduction rules by means of \emph{Definitional Reflection} (DR):
\begin{quotation}
    whatever follows from all the defining conditions of an assertion, follows from the assertion itself
\end{quotation}
For example, disjunction ($\lor$) has the following introduction rules:
\[
\infer{\phi \lor \psi}{\phi} \qquad \infer{\phi \lor \psi}{\psi}
\]
Therefore, the defining conditions of $\phi \lor \psi$ are $\phi$ and $\psi$. Thus, DR warrants the following rule recognizable as the standard elimination rule:
\[
\infer{\chi}{\phi \lor \psi & \deduce{\chi}{[\phi]} & \deduce{\chi}{[\psi]}}
\]
Importantly, DR amounts to a \emph{closed-world assumption} --- in the sense of Reiter~\cite{reiter1981closed} --- on introduction rules as definitions.

As for P-tV based the introduction rules, it is an open problem what logic P-tV based on the elimination rules represents.  This paper shows that, assuming certain conditions about the notion of \emph{reduction} on arguments and \emph{base}, this version of P-tV corresponds to the B-eS for \emph{intuitionistic propositional logic} (IPL) by Sandqvist~\cite{Sandqvist2015IL}. 
That is, one derives the semantic clauses of the B-eS from the semantic clauses of the P-tV. Hence, this version of P-tV based on the elimination rules corresponds to IPL.

In other words, this paper says that the semantics of the logical constants (as expressed in the B-eS) is indeed given by their consequential relationships. Conversely, taking the semantics of the logical constants (as expressed in the B-eS) as conceptually prior to logical consequence, this paper shows that consequence in IPL indeed obtain by virtue of the logical form of the propositions involved.  Consequently, the elimination-based approach provides a more `consequentialist' reading of validity, focusing on the immediate uses of a proposition.

Much of the analysis in this paper concerns understanding precisely how the choice of \emph{reduction} and \emph{base} recover intuitionism. It begins with unpacking \emph{BHK},  \emph{constructivism}, and \emph{intuitionism}. This is the subject of Section~\ref{sec:main}. However, further clarity might be gained by looking at constructivism in a \emph{classical} sense. We discuss this further in Section~\ref{sec:conclusion}.

We begin in Section~\ref{sec:background} with an overview of natural deduction, the setting in which this paper takes place.  In Section~\ref{sec:ptv}, we define P-tV as used in this paper. In Section~\ref{sec:bes}, we give the B-eS for IPL by Sandqvist~\cite{Sandqvist2015IL}. The main contribution of the paper is in Section \ref{sec:main} where we formally relate P-tV and B-eS. In Section~\ref{sec:efq}, discuss the position of negation in P-tS, which is known to be a subtle issue (see, for example, K\"urbis~\cite{kurbis2019proof}, relative to the work of this paper. Finally, in Section~\ref{sec:conclusion}, we give a summary and conclusion to the paper. 

Throughout, we fix a denumerable set of atomic propositions \setAtoms. Relative to such a set we define $\setFormulas$ by the following grammar:
    \[
    \phi ::= \at{p} \in \setAtoms \mid \phi \lor \phi \mid \phi \land \phi \mid \phi \to \phi \mid \bot 
    \]
We may use meta-variables $\Gamma$ and $\Delta$ (possibly adorned with subscripts or primed) to denote sets of formulas; we use $P, Q, S$ (possibly adorned with subscripts or primed) to denote sets of atoms. We may write $\neg \phi$ to abbreviate $\phi \to \bot$.

\section{Background} \label{sec:background}

\subsection{Natural Deduction}

We require some familiarity with natural deduction in the style of Gentzen~\cite{Gentzen,Prawitz2006natural,schroeder1984natural}. In this section, we give a terse but complete summary to keep the paper self-contained. 

The objects studied in natural deduction are \emph{arguments}:

\begin{Definition}[Argument] \label{def:argument}
    An argument is a rooted, finite tree of formulas in which some (possibly none) leaves are marked as discharged. An argument is open if it has undischarged assumptions; otherwise, it is closed.
\end{Definition}

We use calligraphic style to denote arguments (e.g., $\argument{A}$ denotes an argument). The leaves of an argument $\argument{A}$ are its \emph{assumptions}, and the root is its \emph{conclusion}. An argument $\argument{A}$ is an argument from $\Gamma$ to $\phi$ iff the open assumptions of $\argument{A}$ are a subset of $\Gamma$ and the conclusion of $\argument{A}$ is $\phi$. We may use the following notations to express that $\argument{A}$ is an argument from a set of formulas $\Gamma$ to a formula $\phi$:
\[
\deduce{\phi}{\argument{A}} \qquad \deduce{\argument{A}}{\Gamma} \qquad \deduce{\phi}{\deduce{\argument{A}}{\Gamma}}
\]

Throughout this paper, we consider the composition of arguments. Let $\argument{A}$ be an argument with open assumptions $\Gamma$ and $\{\phi_1,\ldots,\phi_n\} \subseteq \Gamma$. Let $\argument{B}_1,\ldots,\argument{B}_n$ be arguments with conclusions $\phi_1,\ldots,\phi_n$, respectively. We write $\rn{cut}(\argument{B}_1,\ldots,\argument{B}_n,\argument{A})$ to denote the argument that results from composing $\argument{A}$ with $\argument{B}_1,\ldots,\argument{B}_n$ at the assumptions; that is,
\[
\rn{cut}(\argument{B}_1,\ldots,\argument{B}_n,\argument{A}) := \raisebox{-1em}{
\deduce{\argument{A}}{
    \deduce{\phi_1}{\argument{B}_1}
    \, \ldots \,
    \deduce{\phi_n}{\argument{B}_n}
}
}
\]
Note that the open assumptions of $\rn{cut}(\argument{B}_1,\ldots,\argument{B}_n)$ are $(\Gamma-\{\phi_1,\ldots,\phi_n\})\cup \Gamma_1 \cup \ldots \cup \Gamma_n$, where $\Gamma_i$ is the set of open assumptions in $\argument{B}_i$ for $i=1,\ldots,n$, respectively.

Importantly, arguments may be regulated by a set of rules. A set of rules is called a \emph{natural deduction system}. This is what we now define. We follow the presentation and ideas from Schroeder-Heister~\cite{schroeder1984natural} and Piecha and Schroeder-Heister~\cite{Schroeder2016atomic,Piecha2017definitional}.
 
\begin{Definition}[Natural Deduction Rules] \label{def:atomicrule}
 An $n$th-level rule is defined as follows:
 \begin{itemize}
  \item[-] A zeroth-level rule is a rule of the following form in which $\phi \in \setFormulas$:
 \[
 \infer{~~\phi~~}{}
 \]
 \item[-] A first-level rule is a rule of the following form in which $\phi_1,\ldots,\phi_n,\phi \in \setFormulas$,
 \[
 \infer{\,\phi\,}{\, \phi_1 & \ldots & \phi_n \,}
 \]
 \item[-] An $(n+1)$th-level rule is a rule of the following form in which $\phi_1,\ldots,\phi_n,\phi \in \setFormulas$ and $\Gamma_1,\ldots,\Gamma_n$ are (possibly empty) sets of $n$th-level atomic rules:
  \[
 \infer{\,\phi\,}{ \, \deduce{\phi_1}{[\Gamma_1]} & \ldots & \deduce{\phi_n}{[\Gamma_n]} \,}
 \]
 \end{itemize}
\end{Definition}

Having sets of rules as hypotheses is more general than having sets of propositions as hypotheses; the former captures the latter by taking zeroth-order rules. Since the premises may be empty, an $m$th-level rule is an $n$th-level rule for any $n>m$. 


\begin{Example} \label{ex:nd-rule}
The following is a natural deduction rule:
\[
\infer{~~\chi~~}{\phi_1\lor \phi_2 & \deduce{\chi}{[\phi_1]} & \deduce{\chi}{[\phi_2]}}
\] 
\end{Example}

We extract a special case of natural deduction rules that do not contain any logical constants:

\begin{Definition}[Atomic Rule] \label{ex:atomic-nd-rule}
    A rule is \emph{atomic} iff it only contains propositional variables.
\end{Definition}
\begin{Example}
    Let $p_1,p_2,d,x \in \setFormulas$. The following is an atomic rule:
    \[
    \infer{~~x~~}{r & \deduce{x}{[p_1]} & \deduce{x}{[p_2]}}
    \]
\end{Example}

Note that atomic rules are an important part of P-tS as they make up the \emph{pre-logic} notion of `proof' that forms the base case of P-tS --- see Section~\ref{sec:bes} for details. Presently, it is important to note the subtlety in their uses from other (`logical') natural deduction rules: they are not closed under substitution. This is expressed explicitly in Definition~\ref{def:derivation}. The reason is both historical and philosophical: rules containing complex formulas are logical reasoning principles appropriate for the logical constants involved; meanwhile, atomic rules are pre-logic notions of reasoning --- see Prawitz~\cite{Prawitz2006natural}, Dummett~\cite{Dummett1991logical}, and Piecha and Schroeder-Heister~\cite{Schroeder2016atomic,Piecha2017definitional}, and Schroeder-Heister~\cite{schroeder1984natural}.

\begin{Example}[Sandqvist~\cite{Sandqvist2015IL}]
    Consider the rule in Example~\ref{ex:atomic-nd-rule} with the propositions standing as follows: $r$ is `Sandy is a sibling of Mary', $p_1$ is `Sandy is a brother of Mary', $p_2$ is `Sandy is a sister of Mary', and $x$ is any other proposition. The rule expresses that the proposition $x$ may be inferred from Sandy's sibling-hood to Mary by case distinction. It is not appropriate on the basis of such reasoning being permitted that one should be able to reason by case-distinction of that sibling-hood to infer the proposition $x$ from some statement, say, `The sky is blue.'
\end{Example}

A collection of rules is called a \emph{system}:

\begin{Definition}[Natural Deduction System]
    A natural deduction system is a set of natural deduction rules. 
\end{Definition}
\begin{Definition}[Atomic System]
    An \emph{atomic system} is a natural deduction system comprising only atomic rules.
\end{Definition}

We use script-style to denote atomic systems (e.g., $\base{A}$ denotes an atomic system). 

\begin{Example}
    In Figure~\ref{fig:nj} is shown the natural deduction system \calculus{NJ} by Gentzen~\cite{Gentzen}. 
\end{Example}

While a natural deduction system may have infinitely many rules, it is at most countably infinite as the language is countable. 

An argument regulated by a natural deduction system $\system{N}$ is called an \emph{$\system{N}$-derivation}. The set of  \emph{$\system{N}$-derivations} may be defined inductively by composing instances of rules from $\base{N}$. To define this formally, we require substitution:

\begin{Definition}[Substitution Function]
    A substitution function is a mapping $\theta:\setAtoms \to \setFormulas$. The set of all substitutions is $\setSubs$.
\end{Definition}

Typically, one defines a substitution function by specifying the map for some finite subsection of $\setAtoms$ and extending it to the rest of the domain by some arbitrary assignment to formulas. This is a minor detail that does not affect the work in this paper. The action of a substitution $\theta$ extends to formulas as follows:
\[
\theta(\phi) :=
 \begin{cases}
\theta(\at{p}) & \text{if } \phi = \at{p} \in \setAtoms \\
\bot & \text{if } \phi =  \bot \\
\theta(\psi_1) \circ \theta(\psi_2) & \text{if } \phi = \psi_1 \circ \psi_2 \text{ for } \circ \in \{\to, \land, \lor\} \\
\end{cases}
\]

\begin{Definition}[Derivation in a Natural Deduction System] \label{def:derivation}
 Let $\system{N}$ be a natural deduction system. The set of $\system{N}$-derivations is defined inductively as follows:
 \begin{itemize}[label=--]
     \item \textsc{Base Case}. Let $\rn{r} \in \system{N}$ be a zeroth-level rule concluding $\phi$. We consider two cases:
     \begin{itemize}[label=--]
     \item $\rn{r}$ is atomic.
     The natural deduction argument consisting of the node $\phi$ is a $\system{N}$-derivation.
     \item $\rn{r}$ is \emph{not} atomic. For any substitution $\theta$, 
      the node $\theta(\phi)$ is an $\system{N}$-derivation.
      \end{itemize}
     \item \textsc{Induction Step}. Let $\rn{r} \in \system{N}$ be an $(n+1)$st-level rule,
  \[
 \infer{\,\phi\,}{ \, \deduce{\phi_1}{[\Gamma_1]} & \hdots & \deduce{\phi_n}{[\Gamma_n]} \,}
 \]
 We consider two cases:
 \begin{itemize}
     \item $\rn{r}$ is atomic. Suppose for each $1 \leq i \leq n$ there is an $\system{N}$-derivation $\argument{D}_i$ of the following form:
 \[
 \deduce{\phi_i}{\deduce{\mathcal{D}_i}{\Gamma_i, \Delta_i}}
 \]
     The natural deduction argument with root $\phi$ and immediate sub-trees $\argument{D}_1, \ldots, \argument{D}_n$ is a $\system{N}$-derivation of $\phi$ from $\Delta_1 \cup \ldots \cup \Delta_n$.
\item $\rn{r}$ is \emph{not} atomic. Suppose for any substitution $\sigma$ such that for each $1 \leq i \leq n$ there is an $\system{N}$-derivation $\argument{D}_i$ of the following form:
 \[
 \deduce{\sigma(\phi_i)}{\deduce{\mathcal{D}_i}{\sigma(\Gamma_i), \Delta_i}}
 \]
 The natural deduction argument with root $\sigma(\phi)$ and immediate sub-trees $\argument{D}_1, \ldots, \argument{D}_n$ is a $\system{N}$-derivation of $\phi$ from $\Delta_1 \cup \ldots \cup \Delta_n$.
 \end{itemize}
 \end{itemize}
 \end{Definition}

\begin{Definition}[Derivability] \label{def:derivability}
     Let $\system{N}$ be a natural deduction system. The $\system{N}$-derivability relation $\proves_\system{N}$ is defined as follows: 
     \[
     \begin{array}{lcl}
     \Gamma \proves_\system{N} \phi & \text{iff} & \text{there exists an $\system{N}$-derivation $\argument{D}$ such that} \\ & & \mbox{the open assumptions of $\argument{D}$ are subset of $\Gamma$ and the conclusion is $\phi$}
     \end{array}
     \]
\end{Definition}


An $\calculus{N}$-derivation is \emph{closed} iff it is closed as an argument, in which case it is called an $\calculus{N}$-proof.

\begin{Example}
    The following is an example of an open $\calculus{NJ}$-derivation:
    \[
    \infer[\ern \lor]{q \lor p}{p \lor q & \infer[\irn \lor]{q \lor p}{[p]} & \infer[\irn \lor]{q \lor p}{[q]}}
    \]
    It witnesses $p \lor q \proves_{\calculus{NJ}} q \lor p$. The labels on the inferences are to aid readability and are not formally part of the argument. 
\end{Example}

This concludes a general introduction to natural deduction in the sense of Gentzen~\cite{Gentzen} suitable for this paper. We include some further \emph{specific} background of natural deduction --- namely, normalization results for $\system{NJ}$ --- below.

\subsection{System $\calculus{NJ}$} \label{sec:il}

There are many presentations of IPL in the literature. Therefore, we begin by fixing the relevant concepts and terminology for this paper.

In this paper, IPL is a certain consequence judgement $\proves$ on sequents. 
Our principal characterization will be through a natural deduction system.

\begin{Definition}[Natural Deduction System \calculus{NJ}]
    The natural deduction system $\calculus{NJ}$ is comprised of the rules in Figure~\ref{fig:nj}.
\end{Definition}

\begin{figure}[t]
\figureline
\vspace{1mm}
\[
\begin{array}{c}
       \infer[\irn{\land}]{\phi \land \psi}{\phi & \psi} \qquad \infer[\ern{\land^1}]{\phi}{\phi \land \psi} \quad \infer[\ern{\land^2}]{\psi}{\phi \land \psi}
       \qquad \infer[\irn{\to}]{\phi \to \psi}{\deduce{\phi}{[\psi]}} 
       \qquad
       \infer[\ern{\bot}]{\phi}{\bot} 
\\[4mm]
    \infer[\irn{\lor^1}]{\phi \lor \psi}{\phi} 
    \quad 
    \infer[\irn{\lor^2}]{\phi \lor \psi}{\psi} 
    \qquad
    \infer[\ern{\lor}]{\chi}{\phi \lor \psi & \deduce{\chi}{[\phi]} & \deduce{\chi}{[\psi]}}
    \qquad
    \infer[\ern{\to}]{\phi}{\phi & \phi \to \psi} 
\end{array}
\]
\figureline
\caption{Natural Deduction System $\calculus{NJ}$}
    \label{fig:nj}
\end{figure}

\begin{Proposition}[Gentzen~\cite{Gentzen}] \label{prop:njproof}
     There is an \calculus{NJ}-proof of $\phi$ iff $\emptyset \proves \phi$.
\end{Proposition}



The rules of $\calculus{NJ}$ with subscripts $\rn{I}$ and $\rn{E}$ are the \emph{introduction rules} ($I$-rules) and \emph{elimination rules} ($E$-rules), respectively. They sometimes come in pairs; for example,
\[
\infer[\ern{\land}]{
        \phi
    }{
    \infer[\irn{\land}]{\phi \land \psi}{\deduce{\phi}{\mathcal{D}_1} & \deduce{\psi}{\mathcal{D}_2}}
    }
\]
Such derivations contain superfluous argumentation for $\phi$ and so are called \emph{detours}.

\begin{Definition}[Detour]
    A \emph{detour} in a derivation is a sub-derivation in which a formula is obtained by an $I$-rule and is then the major premise of the corresponding $E$-rule.
\end{Definition} 

\begin{Definition}[Canonical]
    A derivation is canonical iff it contains no detours.
\end{Definition}

Prawitz~\cite{Prawitz2006natural} proved that canonical \calculus{NJ}-proofs are complete for IPL; that is, we may refine Proposition~\ref{prop:njproof} as follows:

\begin{Proposition}[Prawitz~\cite{Prawitz2006natural}] \label{prop:normal}
     There is a canonical \calculus{NJ}-derivation from $\Gamma$ to $\phi$ iff $\Gamma \proves \phi$.
\end{Proposition}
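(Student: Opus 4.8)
The plan is to reduce this to Proposition~\ref{prop:nj} together with the normalization result for $\calculus{NJ}$ alluded to just above (due to Prawitz~\cite{Prawitz1965}), namely that $\redto$ is normalizing and that its normal forms are exactly the canonical derivations. One direction is then immediate: a canonical $\calculus{NJ}$-derivation of $\Gamma:\phi$ is \emph{a fortiori} an $\calculus{NJ}$-derivation of $\Gamma:\phi$, so Proposition~\ref{prop:nj} yields $\Gamma \proves \phi$.

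For the converse, suppose $\Gamma \proves \phi$. By Proposition~\ref{prop:nj} there is an $\calculus{NJ}$-derivation $\argument{D}$ witnessing $\Gamma:\phi$. Since $\redto$ is normalizing, there is some $\argument{D}'$ with $\argument{D} \redtostar \argument{D}'$ and $\argument{D}'$ a $\redto$-normal form; by the stated property of $\redto$, any such $\argument{D}'$ is canonical. It remains to check that $\argument{D}'$ still witnesses $\Gamma:\phi$ --- that is, that a reduction step changes neither the conclusion of a derivation nor enlarges its set of open assumptions. This is verified by inspecting the reduction rules one by one (the $\to$ case displayed above, and the analogous cases for $\land$ and $\lor$): in each case the contractum has the same conclusion as the redex, and its open assumptions form a subset of those of the redex. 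For instance, the $\to$-reduction substitutes the sub-derivation $\argument{D}_1$ of $\phi$ for the discharged occurrences of $[\phi]$ in $\argument{D}_2$, so the open assumptions of the result are contained in those of $\argument{D}_1$ together with those of $\argument{D}_2$ other than $\phi$ --- which is precisely the set of open assumptions of the original --- and similarly for the $\land$- and $\lor$-reductions, where the contractum is a sub-derivation (with at most a substitution performed) of the redex. Passing to the reflexive--transitive closure, $\argument{D}'$ witnesses $\Gamma:\phi$, and is canonical, as required. (Note that we need only the \emph{existence} of one normal form, so the choice of reduction strategy is irrelevant here.)

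The substantive content --- and the only real obstacle --- is the normalization theorem itself, which we are taking as given; once it is in hand, the proposition is essentially bookkeeping about which assumptions remain open under reduction, together with the soundness-and-completeness statement of Proposition~\ref{prop:nj}.
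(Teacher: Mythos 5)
Your proposal is correct and follows essentially the same route as the paper, which simply cites Prawitz's normalization theorem for $\calculus{NJ}$; you take that theorem as given (as the paper does) and spell out the routine bookkeeping --- via Proposition~\ref{prop:nj} and the fact that reduction preserves the conclusion and does not enlarge the open assumptions --- that the paper leaves implicit. No gap.
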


The argument uses a reduction relation $\redto$ that precisely eliminates detours; for example, detours with implication ($\to$) are reduced as follows:
\[
\infer[\ern{\to}]{\psi}{
    \deduce{\phi}{\mathcal{D}_1} & 
    \infer[\irn{\to}]{\phi \to \psi}
        {
            \deduce{\psi}{\deduce{\mathcal{D}_2}{[\phi]}}
        }
    }
    \qquad
    \raisebox{1em}{$\redto$}
    \qquad
    \deduce{\psi}{
         \deduce{\mathcal{D}_2}{
            \deduce{\phi}{\mathcal{D}_1}
            }
        }
\]
The reflexive and transitive closure of $\redto$ is denoted $\redtostar$. This reduction relation is normalizing, and its normal forms are canonical proofs. 


This establishes the relevant syntax and proof theory required for IPL in this paper.

\subsection{The BHK Interpretation} \label{sec:bhk}
Intuitionism, as defined by Brouwer~\cite{brouwer1913intuitionism}, is the view that an argument is valid when it provides sufficient evidence for its conclusion. This is IL. Famously, as a consequence, IL rejects \emph{the law of the excluded middle} --- that is, the meta-theoretic statement that either a statement or its negation is valid. This law is equivalent to the principle that in order to prove a proposition it suffices to show that its negation is contradictory. In IL, such an argument does not constitute sufficient evidence for its conclusion. 

Heyting~\cite{heyting1966intuitionism} and Kolmogorov~\cite{kolmogorov} provided a semantics for intuitionistic proof that captures the evidential character of intuitionism, called the Brouwer-Heyting-Kolmogorov (BHK) interpretation of IL. It is now the standard explanation of the logic.

Supposing a notion of proof for atomic formulae,
\begin{itemize}[label=--]
    \item a proof $\argument{A}$ of $\phi \land \psi$ is a pair $\langle \argument{B}_1, \argument{B}_2 \rangle$ such that $\argument{B}_1$ is a proof of $\phi$ and $\argument{B}_2$ is a proof of $\psi$;
    \item a proof $\argument{A}$ of $\phi \lor \psi$ is either a pair $\langle 0, \argument{B} \rangle$ such that $\argument{B}$ is a proof of $\phi$ or a pair $\langle 1, \argument{B} \rangle$ such that $\argument{B}$ is a proof of $\psi$;
    \item a proof of $\phi \to \psi$ is a method $f$ for constructing a proof of $\psi$ from a proof of $\phi$;
    \item nothing is a proof of $\bot$.
\end{itemize}

The \emph{propositions-as-types} correspondence --- see Howard, Barendregt, and others \cite{howard1980formulae,BDS2013,Barendregt1991,Barendregt1993} --- gives a standard way of instantiating the BHK interpretation as terms in the simply-typed $\lambda$-calculus. Technically, the set-up can be sketched as follows: a judgement 
that $\mathcal{D}$ is an $\calculus{NJ}$-proof of the sequent $\phi_1 , \ldots , \phi_k \seq \phi$ corresponds to a typing 
judgement 
\[
    x_1 : A_1 , \ldots , x_k : A_k \vdash M(x_1, \ldots, x_k) : A 
\]
where the $A_i$s are types corresponding to the $\phi_i$s, the $x_i$s correspond to placeholders for proofs of the $\phi_i$s, the $\lambda$-term $M(x_1, \ldots, x_k)$ corresponds to $\mathcal{D}$, and the type $A$ corresponds to $\phi$. 

Lambek~\cite{lambek1980lambda} gave a more abstract account by showing that simply-typed $\lambda$-calculus is the internal language of \emph{cartesian closed categories} (CCCs), thereby giving a categorical semantics of proofs for IPL. In this set-up, 
a morphism 
\[
    \llbracket \phi_1 \rrbracket \times \ldots \times \llbracket \phi_k \rrbracket 
        \stackrel{\llbracket \mathcal{D} \rrbracket}{\longrightarrow} 
            \llbracket \phi \rrbracket
\]
in a CCC, where $\times$ denotes cartesian product, that interprets the $\calculus{NJ}$-proof $\mathcal{D}$ of $\phi_1 , \ldots , \phi_k \seq \phi$ also interprets the term $M$, where the $\llbracket \phi_i \rrbracket$s interpret also the $A_i$s and $\llbracket \phi \rrbracket$ also interprets $A$. 

Altogether, this describes the \emph{Curry-Howard-Lambek} correspondence for IPL as summarized in Figure~\ref{fig:chl} in which:
\begin{itemize}[label=--]
    \item $\mathcal{D} \Rightarrow \Gamma \seq \phi$ denotes that $\mathcal{D}$ is an $\calculus{NJ}$-derivation of $\phi$ from $\Gamma$;
    \item $x : A_\Gamma \vdash M(x) : A_\phi$ denotes a typing judgment, as described above, corresponding to $\mathcal{D}$; and
    \item $\llbracket \Gamma \rrbracket \stackrel{\llbracket \mathcal{D} \rrbracket}{\rightsquigarrow} \llbracket \phi \rrbracket$ denotes that $\llbracket \mathcal{D} \rrbracket$ is a morphism from $\llbracket \Gamma \rrbracket$ to $\llbracket \phi \rrbracket$ in a CCC. 
\end{itemize} 

\begin{figure}[t]
    \hrule
    \[
\xymatrix{
    x : \Gamma \proves M(x) : \phi \ar@{<->}[rr] & & \llbracket \Gamma \rrbracket \stackrel{\llbracket \mathcal{D} \rrbracket}{\rightsquigarrow} \llbracket \phi \rrbracket \\
    & \ar@{<->}[ul] \mathcal{D} \Rightarrow \Gamma \seq \phi \ar@{<->}[ur] &
}
\]
\hrule \vspace{1mm}
    \caption{Curry-Howard-Lambek Correspondence}
    \label{fig:chl}
\end{figure}

To generalize to full IL (and beyond), Seely~\cite{seely1983hyperdoctrines} modified this categorical set-up and introduced \emph{hyperdoctrines} --- indexed categories of CCCs with coproducts over a base with finite products. Martin-L\"of~\cite{martin1975intuitionistic} gave a formulae-as-types correspondence for predicate logic using dependent type theory. Barendregt~\cite{Barendregt1991} gave a systematic treatment of type systems and the propositions-as-types correspondence. A categorical treatment of dependent types came with Cartmell~\cite{cartmell} --- see also, for examples among many, work by Streicher~\cite{Streicher1988}, Pavlovi\'c~\cite{Pavlovic1990}, Jacobs~\cite{Jacobs}, and Hofmann~\cite{Hofmann1997}. In total, this gives a semantic account of \emph{proof} for first- and higher-order predicate intuitionistic logic based on the BHK interpretation.

Pym and Ritter~\cite{pym2004reductive} have provided a generalization of the BHK interpretation closely related to P-tV. The original motivation was as a semantics of reductive logic.

The traditional approach to logic is through the \emph{deductive} paradigm in which conclusions are inferred from established premises. However, in practice, logic typically proceeds through the dual paradigm known as \emph{reductive} logic: sufficient premises are inferred from putative conclusions by means of `backwards' inference rules. Pym and Ritter~\cite{pym2004reductive} have given a semantics to constructions in reductive logic using the language of BHK in a way that recalls P-tV. Specifically, they give the \emph{constructions-as-realizers-as-arrows correspondence}~\ref{fig:cor} based on \emph{polynomial} categories, which extend a category in which arrows denote proofs for a logic by additional arrows that supply `proofs' for propositions that do not have proofs but appear during reduction:
\begin{enumerate}[label=--]
    \item $\Phi \Rightarrow \Gamma \seq \phi$ denotes that $\Phi$ is a sequence of reductions for the sequent $\Gamma \seq \phi$;
    \item $[\Gamma] \proves [\Phi] : [\phi]$ denotes that $[\Phi]$ is a \emph{realizer} of $[\phi]$ with respect to the assumptions $[\Gamma]$; and
    \item $\llbracket \Gamma \rrbracket \stackrel{\llbracket \Phi \rrbracket}{\rightsquigarrow} \llbracket \phi \rrbracket$ denotes that $\llbracket \Gamma \rrbracket$ is a morphism from $\llbracket \Gamma \rrbracket$ to $\llbracket \phi \rrbracket$ in the appropriate polynomial category. 
\end{enumerate}

They also defined a judgement 
$w \Vdash_\Theta (\Phi:\phi)\Gamma$  which says that $w$ is a world witnessing that $\Phi$ is a reduction of $\phi$ to $\Gamma$, relative to the indeterminates of $\Theta$.

\begin{figure}[t]
    \hrule
    \[
\xymatrix{
    [\Gamma] \proves [\Phi]:[\phi] \ar@{<->}[rr] & &   \llbracket \Gamma \rrbracket \stackrel{\llbracket \Phi \rrbracket}{\rightsquigarrow} \llbracket \phi \rrbracket \\
    & \ar@{<->}[ul] \Phi \Rightarrow \Gamma \seq \phi \ar@{<->}[ur] &
}
\]
\hrule \vspace{1mm}
\caption{Constructions-as-Realizers-as-Arrows Correspondence}
    \label{fig:cor}
\end{figure}

We observe in this characterization obvious similarities with the BHK interpretation of IL (see Section~\ref{sec:il}). Specifically, it coheres with the generalization by Pym and Ritter~\cite{pym2004reductive}:
\begin{itemize}[label=--]
    \item the judgment $\Phi \Rightarrow \Gamma \seq \phi$ corresponds to P-tV, 
    \item the judgment $[\Gamma] \proves [\Phi]:[\phi]$ corresponds to the realizers interpretation of arguments in Section~\ref{sec:main}, and
    \item the judgment $\llbracket \Gamma \rrbracket \stackrel{\llbracket \Phi \rrbracket}{\rightsquigarrow} \llbracket \phi \rrbracket$ corresponds to B-eS.
    \end{itemize}
Thus, in this paper, we move from the realizers perspective, in which the witnessing arguments must be constructed explicitly to the types perspective in which the witnessing arguments are observed implicitly as arrows. 

\section{Proof-theoretic Validity} \label{sec:ptv}

There are several accounts of proof-theoretic validity—see, for example, Prawitz~\cite{Prawitz2006natural,Prawitz1971ideas,Prawitz1973towards,Prawitz1972} and Schroeder-Heister~\cite{Schroeder2006validity,SEP-PtS}. Typically, they are based on the introduction rule and can be seen as a way to realize the sentiment expressed by Gentzen~\cite{Gentzen} (see Section~\ref{sec:intro}) that the introduction rules represent definitions of the connectives involved. Prawitz's Conjecture~\cite{Prawitz1973towards} is the statement that IPL is complete with respect to P-tV based on the introduction rules.

As explained in Section~\ref{sec:intro}, Piecha et al.~\cite{Piecha2015failure} show that Prawitz's Conjecture fails when P-tV is slightly simplified. Indeed, Stafford~\cite{Stafford2021} has shown that the logic described by the simplified version of P-tV is a \emph{(general) inquisitive logic}. In contrast, Takemura~\cite{Takemura} suggests that the conjecture holds without the simplification.

In this paper, we consider an alternate version of P-tV that is based on the elimination rules. In contrast to the treatment of P-tV based on the introduction rules, we show that this version of P-tV does correspond to IPL when certain details of the setup are met.

In Section~\ref{sec:background}, we introduced the idea of an atomic system. These atomic systems will form the base case of validity. However, we do not necessarily want to consider \emph{all} atomic systems, but rather some specific forms—see, for example, Piecha and Schroeder-Heister~\cite{Schroeder2016atomic,Piecha2017definitional}. Therefore, fix a notion of base $\mathfrak{B}$ (i.e., a subset of all atomic systems). It is these systems that will form the basis of validity. Henceforth, we consider only atomic systems that are bases. Thus, given a base $\base{B}$, we may write $\base{C} \supseteq \base{B}$ to express that $\base{C}$ is a superset of $\base{B}$ that is also a base.

Furthermore, in Section~\ref{sec:background}, we discussed the idea of reduction by Prawitz~\cite{Prawitz2006natural}. This also forms an essential part of the definition of proof-theoretic validity. Presently, we will not fix some particular set of reductions, but rather work relative to such sets as a parameter. Thus, let $\mathbb{R}$ be a set of reductions—that is, functions from arguments to arguments. While specific features of the elements of $\mathbb{R}$ may be desirable (e.g., that they are computable), we shall not impose any such restrictions until they become necessary.

\begin{Definition}[Validity for Arguments] \label{def:bvalid}
   Let $\base{B} \in \mathfrak{B}$ be a base. An argument $\argument{A}$ is $\base{B}$-valid iff:
    \begin{enumerate}[label={(\arabic*)}]
      \item it is a closed argument ending with an atomic formula and either it is a $\base{B}$-proof or it $\mathbb{R}$-\emph{reduces} to a $\calculus{NJ}\cup\base{B}$-proof\label{def:bvalid:one}
        \item it is a closed argument ending with a complex formula and, for any $\base{C} \supseteq \base{B}$, for any extension of $\mathcal{A}$ by an elimination rule, using $\base{C}$-valid arguments for the minor derivations and restricting to atomic conclusions where applicable (namely, $\ern \bot$ and $\ern{\lor}$), the result is a $\base{C}$-valid argument \label{def:bvalid:two}
        \item it is an open argument such that, for any $\base{C} \supseteq \base{B}$, any extension of $\argument{A}$ by $\base{C}$-valid arguments of the assumptions $\argument{C}_1, \ldots, \argument{C}_n$ results in a $\base{C}$-valid argument. \label{def:bvalid:three}
    \end{enumerate}
    An argument $\argument{A}$ is valid iff $\argument{A}$ is $\base{B}$-valid for every base $\base{B}$.
\end{Definition}
We may write $\Gamma \entails_{\base{B}} \phi$ to denote that there exists a $\base{B}$-valid argument from $\Gamma$ to $\phi$. Similarly, we may write $\Gamma \entails \phi$ to denote that there is a valid argument from $\Gamma$ to $\phi$. Immediately by Definition~\ref{def:bvalid}, 
\[
\Gamma \entails \phi \qquad \text{iff} \qquad \Gamma \entails_{\base{B}} \phi
\]

Definition~\ref{def:bvalid} merits some remarks, particularly \ref{def:bvalid:two}. The restriction to $\base{C}$-derivations with an \emph{atomic} conclusion is required to render the definition inductive. In practice, this side condition only arises in the case of $\ern \lor$ and $\ern \bot$. However, one could replace $\ern \land^1$ and $\ern \land^2$ with
\[
\infer{\chi}{\phi \land \psi & \deduce{\chi}{[\phi]} & \deduce{\chi}{[\psi]}}
\]
and the condition would then be more uniformly applied. The original formulation of validity for arguments based on elimination-rules by Prawitz~\cite{Prawitz1971ideas} did not have the restriction and instead omitted disjunction altogether. The inclusion of disjunction was published in Prawitz~\cite{Prawitz2007}, where he also refers to Dummett~\cite{Dummett1991logical}. 

Schroeder-Heister~\cite{Schroeder2006validity} observe that the restriction is closely related to the fact that the definability of first-order logical constants in second-order propositional $\forall \to$-logic. In particular,
\[
U \lor V := \forall X. (U \to X) \to (U \to X) \to X
\]
--- see Prawitz~\cite{Prawitz2006natural}. This suggests that proof-theoretic validity based on the elimination rules corresponds to \emph{atomic second-order propositional logic} $F_\text{at}$, as studied by Ferreira~\cite{Ferreira2006}. In this case, the variable $X$ in the second-order definition of $\lor$ is restricted to range over atoms.

There is a question regarding what logic this notion of P-tV represents. Ferreira and Ferreira~\cite{Ferreria2013,ferreira2014faithfulness} showed that with the second-order definitions, $F_\text{at}$ corresponds to IPL. Schroeder-Heister~\cite{Schroeder2015elim} notes that this result renders IPL a good candidate for the notion of validity above, but that there is room for doubt as it depends on the choice of $\mathfrak{B}$ and $\mathbb{R}$. In Section~\ref{sec:main}, we specify some condition for which the notion of validity in Definition~\ref{def:bvalid} characterizes IPL.


Taking $\mathbb{R}$ to only contain Prawitz's~\cite{Prawitz2006natural} reduction operators does not suffice to render all the rules of \calculus{NJ} valid.

\begin{Example}[Schroeder-Heister~\cite{Schroeder2015elim}] \label{ex:permutations}
    Consider $\ern \lor$ for any $\base{B}$,
    \[
    \text{if $\Gamma \entails_{\base{B}} \phi \lor \psi$, $\Gamma, \phi \entails_{\base{B}} \chi$, and  $\Gamma, \psi \entails_{\base{B}} \chi$, then  $\Gamma \entails_{\base{B}} \chi$} \tag{$\entails_\base{B} \ern \lor$}
    \]
    To render this statement true, it suffices to use, in addition to Prawitz's~\cite{Prawitz2006natural} reductions, some \emph{permutative} reductions.
    
For example, in the case where $\chi = \chi_1 \land \chi_2$, one has the argument
\[
\infer{\chi_1 \land \chi_2}{\deduce{\phi \lor \psi}{\mathcal{D}} & \deduce{\chi_1\land\chi_2}{\deduce{\mathcal{D}_1}{[\phi]}} & \deduce{\chi_1\land\chi_2}{\deduce{\mathcal{D}_2}{[\psi]}} }
\]
where $\argument{D}$, $\argument{D}_1$, $\argument{D}_2$ witness the hypotheses in $(\entails_\base{B} \ern \lor)$.  

To show that this argument is $\base{B}$-valid, given that $\mathcal{D}$, $\mathcal{D}_1$ and $\mathcal{D}_2$ are $\base{B}$-valid, we may use the following \emph{permutation} reduction:
\[
\infer{\chi_i}{\infer{\chi_1 \land \chi_2}{\deduce{\phi \lor \psi}{\mathcal{D}} & \deduce{\chi_1\land\chi_2}{\deduce{\mathcal{D}_1}{[\phi]}} & \deduce{\chi_1\land\chi_2}{\deduce{\mathcal{D}_2}{[\phi]}}}
}
\qquad \raisebox{2em}{$\redto_i$} \qquad
\infer{\chi_i}{\deduce{\phi \lor \psi}{\mathcal{D}} & \infer{\chi_i}{\deduce{\chi_1\land\chi_2}{\deduce{\mathcal{D}_1}{[\phi]}}} & \infer{\chi_i}{\deduce{\chi_1\land\chi_2}{\deduce{\mathcal{D}_2}{[\psi]}}}}
\]
We carry on reducing until the conclusion is an atom, at which point we appeal to \ref{def:bvalid:one} in Definition~\ref{def:bvalid}. 
\end{Example}

We remain agnostic as to exactly what reductions are used:

\begin{Condition} \label{cond:disj-elim}
    The set of reductions $\mathbb{R}$ is such that the use of disjunction elimination (i.e., $\ern \lor$) is a $\base{B}$-valid argument.

    That is, the argument 
     \[
    \infer{\chi}{
   \chi_1 \lor \chi_2 & \ddeduce{\chi}{\argument{B}_1}{\Gamma, [\chi_1]} & \ddeduce{\chi}{\argument{B}_2}{\Gamma, [\chi_2]}}
    \]
    is $\base{B}$-valid whenever $\argument{B}_1$ and $\argument{B}_2$ are $\base{B}$-valid.
    
   Furthermore, for any $\base{C} \supseteq \base{B}$, 
    \[
    \infer{\chi}{
    \ddeduce{\chi_1 \lor \chi_2}{\argument{B}}{\Gamma} & \ddeduce{\chi}{\argument{B}_1}{\Gamma, [\chi_1]} & \ddeduce{\chi}{\argument{B}_2}{\Gamma, [\chi_2]}}
    \]
    is $\base{C}$-valid whenever $\argument{B}$ is $\base{B}$-valid, and $\argument{B}_1$ and $\argument{B}_2$ are $\base{C}$-valid.
\end{Condition}

We return to the issue of reduction operators and bases in Section~\ref{sec:main}. Presently, we require only that the notion of reduction can eliminate any extraneous logical content from a $\base{B}$-valid argument:

\begin{Condition}\label{cond:atomic}
    If there is a $\base{B}$-valid argument from $P \subseteq \mathbb{A}$ to $p \in \mathbb{A}$, then there is a $\base{B}$-derivation from $P$ to $p$.
\end{Condition}

The same idea allows us to characterize exactly what it means for a formula $\chi$ to be a vacuous assumption in an argument. This means either that there already exists a valid argument for it relative to the current assumptions, so that it needs not be assumed itself; or it means that whenever there is a valid argument from it, there is a valid argument without it. These two cases correspond to the left and right directions of the following condition:

\begin{Condition} \label{cond:comp}
    The set of reductions $\mathbb{R}$ is such that the following holds:
    \[
    \Gamma \entails_{\base{B}} \chi \qquad \mbox{iff} \qquad \mbox{$\forall \base{C} \supseteq \base{B}, \forall p \in \setAtoms$, if $\chi, \Gamma \entails_{\base{C}} p$, then $\Gamma \entails_{\base{C}} p$}
    \]
\end{Condition}

This completes the definition of P-tV.

\section{Base-extension Semantics} \label{sec:bes}

As explained in Section~\ref{sec:intro}, by `base-extension semantics' we mean semantics for a logic in terms of proofs in a base. There are many varied examples of B-eS in the literature including, \emph{inter alia}, Eckhardt and Pym~\cite{Eckhardt} for modal logic, Gheorghiu et al.~\cite{IMLL,BI} for substructural logic, Makinson~\cite{makinson2014inferential} for classical logic, Nascimento~\cite{nascimentothesis,stafford2023} using multi-bases, and Piecha et al.~\cite{Piecha2015failure,Piecha2016completeness,Piecha2019incompleteness} and Stafford~\cite{Stafford2021} for super-intuitionistic logics. Also related, taking a slightly more general scope, is work by Goldfarb~\cite{goldfarb2016dummett}. In this paper, we follow the version of B-eS by Sandqvist~\cite{Sandqvist2009CL,Sandqvist2015IL}.

The B-eS for IPL given by Sandqvist~\cite{Sandqvist2015IL} only admits certain atomic systems. A \emph{Sandqvist Base} is a an atomic system containing only atomic rules of the form
\[
\infer{~~\at{c}~~}{} \qquad  \infer{c}{p_1 & \ldots & p_n} \qquad \infer{\at{c}}{~~ \deduce{\at{p}_{1}}{[\at{Q}_{1}]} & \ldots & \deduce{\at{p}_n}{[\at{Q}_n]}
~~}
\]
where $Q_1,\ldots,Q_n$ are (possibly empty) finite sets of atoms. The set of Sandqvist bases is denoted $\mathfrak{S}$.


The B-eS for IPL is defined by the following \emph{support} judgment:

\begin{Definition}[Support in a Base] \label{def:supp}
    Support in a base $\base{B}$ is the least relation $\supp_\base{B}$ satisfying the clauses of Figure~\ref{fig:sandqvist:support}, where $\Delta \neq \emptyset$.
\end{Definition}

\begin{figure}[t]
\hrule 
\vspace{2mm}
\[
    \begin{array}{l@{\quad}c@{\quad}l@{\quad}r}
        \supp_{\base{B}} \at{p}  & \mbox{iff} &   \proves_{\base{B}} \at{p} & \mbox{(At)}  \\[1mm]
        \supp_{\base{B}} \bot & \mbox{iff} &    \supp_{\base{B}} \at{p} \text{ for any } \at{p} \in \setAtoms & (\bot) \\[1mm]
        \supp_{\base{B}} \phi \land \psi   & \mbox{iff} & \mbox{$\supp_{\base{B}} \phi$ and $\supp_{\base{B}} \psi$} & (\land) \\[1mm]
        \supp_{\base{B}} \phi \lor \psi & \mbox{iff} &  \mbox{$\forall \base{C} \supseteq \baseB$ and $\forall \at{p} \in \setAtoms$,} \\ 
        & & \mbox{if $\phi \supp_{\base{C}} \at{p}$ and $\psi \supp_{\base{C}} \at{p}$, then $\supp_{\base{C}} \at{p}$} & (\lor)  \\[1mm]
        \supp_{\base{B}} \phi \to \psi & \mbox{iff} & \phi \supp_{\base{B}} \psi & (\to) \\[1mm]
        \hspace{-4mm} \Delta \supp_{\base{B}} \phi & \mbox{iff} & \mbox{$\forall \base{C} \supseteq \baseB$, if $\supp_{\base{C}} \psi$ for any $\psi \in \Delta$, then $\supp_{\base{C}} \phi$ } &  \mbox{(Inf)} \\[1mm]
        \hspace{-1em} \Gamma \supp_{} \phi & \mbox{iff} & \mbox{$\Gamma \supp_{\base{B}} \phi$ for any $\base{B}$}
    \end{array}
\]
\hrule
\caption{Base-extension Semantics for IPL} 
\label{fig:sandqvist:support}
\end{figure}

There are clear similarities between the B-eS  and P-tV. Perhaps most striking is the treatment of disjunction ($\lor$). In the case of the B-eS, the `second-order' definition adumbrates the categorical expression of P-tS by Pym et al.~\cite{Pym2022catpts,pym2024categorical} and the logic programming perspective by Gheorghiu and Pym~\cite{NAF}. Importantly, Piecha et al.~\cite{Piecha2015failure,Piecha2016completeness,Piecha2019incompleteness} have shown that the semantics obtained by replacing ($\lor$) in Figure~\ref{fig:sandqvist:support} with a Kripke-like clause for disjunction 
\[
\supp_{\base{B}} \phi \lor \psi \qquad \mbox{iff} \qquad \supp_{\base{B}} \phi \text{ or } \supp_{\base{B}} \psi
\]
is incomplete for IPL.

\begin{Theorem}[Sandqvist~\cite{Sandqvist2015IL}]\label{thm:sandqvist:snc}
    $\Gamma \supp \phi$ iff $\Gamma \proves_{\base{B}} \phi$.
\end{Theorem}

We want to relate this semantics directly to the construction of intuitionistically valid arguments. The theorem cannot be generalized as follows:
\[
\Gamma \supp_\base{B} \phi \qquad \mbox{iff} \qquad \Gamma \proves_{\calculus{NJ} \cup \base{B}} \phi
\]
An immediate counter-example is given in the case where $\Gamma = \emptyset$ and $\phi = \bot$. \smallskip

\noindent Aside: \emph{In this paper, $\bot$ cannot be proved in a base $\base{B}$ --- that is, $\proves_{\base{B}} \bot$ is impossible. Indeed, it is not grammatical as bases only concern non-logical content and, in this paper, $\bot$ is regarded as a logical construct! While there are versions of P-tS in which 
$\bot$ is included in bases --- see, for example, Piecha et al.~\cite{Piecha2015failure} --- we follow the set-up of Prawitz~\cite{Prawitz1971ideas} and Sandqvist~\cite{sandqvist2015hypothesis} where it is not.} \smallskip

Nonetheless, the base case holds and plays an important part in the proof of completeness:
\begin{Lemma}[Sandqvist~\cite{Sandqvist2015IL}] \label{lem:sand-base}
    Let $P \subseteq \setAtoms$ and $p \in \setAtoms$,
    \[
    P \supp_{\base{B}} p \qquad \mbox{iff} \qquad P \proves_{\base{B}} p
    \]
\end{Lemma}

Rather than focusing on the notion of derivability, we may recover a version of the generalization by focusing of \emph{valid argument}. This is the main result of the paper, Theorem~\ref{thm:ptv-bes-equiv}:
\[
\Gamma \supp_\base{B} \phi \qquad \mbox{iff} \qquad  \Gamma \entails_\base{B} \phi
\]
Since P-tV considers arguments and their dynamics and B-eS represents only a judgment of formulae, we may think of them as the operational and declarative counterparts of the same proof-theoretic semantics.

\section{From Proof-theoretic Validity to Base-extension Semantics} \label{sec:main}

\subsection{Constructivity, Intuitionism, and Validity}
Schroeder-Heister~\cite{Schroeder2007modelvsproof} observes that P-tV represents a `constructive' notion of validity. In this section, we explore this idea as it explicates how P-tV relates to B-eS. We can understand this through \emph{realizability}: formulas are \emph{realized} by objects, known as \emph{realizers}, whose existence certifies the \emph{validity} of the formula.

In the BHK interpretation of IPL, a realizer of a formula $\phi \to \psi$ is a function that takes as inputs a realizer of $\phi$ and outputs a realizer of $\psi$ --- see Section~\ref{sec:bhk}. This functional relation is what makes the semantics constructive, as the functions construct a realizer for $\psi$ from a realizer for $\phi$.

In P-tV, a realizer for $\phi \to \psi$ is a valid argument for $\phi \to \psi$. This is identified with a valid argument \emph{from $\phi$ to $\psi$}. Thus, in adopting P-tV, we move from functions to arguments as the method for creating realizers.

Recall that P-tV (Definition~\ref{def:bvalid}) is paramaterized on a sets of reductions $\mathbb{R}$. Precisely what notion of constructivity it expresses depends on what reductions are adopted. Presently, we present some conditions that should express the intuitionistic notion of constructivity. In Section~\ref{sec:conclusion}, we discuss how this analysis changes as the target logic changes. 

Firstly, for intuitionism, we expect at least the introduction rules to be valid constructors. Thus:

\begin{Condition} \label{cond:intros}
The set of reductions $\mathbb{R}$ is such that uses of the introduction rules comprise valid arguments.
\end{Condition}

For example, the argument
\[
\infer{\phi_1 \land \phi_2}{\phi_1 & \phi_2}
\]
is a valid argument for any $\phi$. By Definition~\ref{def:bvalid}, this means that for any $\base{B}$-valid arguments for $\phi$ and $\psi$, one has a $\base{B}$-valid argument for $\phi \land \psi$.

Observe that the traditional reduction operators by Prawitz~\cite{Prawitz2006natural} suffice for Condition~\ref{cond:intros}. This follows as Definition~\ref{def:bvalid} essentially requires one to assume a $\base{B}$-valid argument for the premises and then apply an elimination rule, and the reductions precisely return the assumed argument.

Secondly, we require understanding the conditions under which one has a realizer for $\bot$. In contrast to BHK, the answer ought not to be `never' in P-tV, as presumably one should be able to conclude $\bot$ from an inconsistent set of assumptions. We return to this in Section~\ref{sec:efq}. Dummett~\cite{Dummett1991logical} proposes the following for intuitionism:

\begin{Condition} \label{cond:bot}
The set of reductions $\mathbb{R}$ is such that the following holds:
\[
\Gamma \entails_{\base{B}} \bot \qquad \text{iff} \qquad \Gamma \entails_{\base{B}} p \text{ for all } p \in \setAtoms
\]
\end{Condition}


Thirdly, Schroeder-Heister~\cite{Schroeder2012categorical} observes that constructive semantics follows the standard dogma of semantics in which the validity of a consequence is understood as the \emph{transmission} of the validity of assumptions to the conclusion. Hence, by Definition~\ref{def:bvalid}\ref{def:bvalid:three}, a $\base{B}$-valid argument from $\phi$ to $\psi$ transmits $\base{C}$-validity of $\phi$ to $\psi$ for $\base{C} \supseteq \base{B}$. Hence, accepting this standard dogma, we impose the following:

\begin{Condition}\label{cond:inf}
The set of reductions $\mathbb{R}$ is such that the following holds:
\[
\Gamma \entails_{\base{B}} \phi \qquad \text{iff} \qquad \forall \base{C} \supseteq \base{B}, \text{ if } \entails_{\base{C}} \psi \text{ for } \psi \in \Gamma, \text{ then } \entails_{\base{C}} \phi
\]
\end{Condition}

This concludes the analysis of constructiveness, intuitionism, and validity.

\subsection{From Proof-theoretic Validity to Support} \label{Chapter:PtV_BeS:Sat_Supp}
In summary, the conditions for reductions \(\mathbb{R}\) are such that for any base \(\base{B}\):
\begin{itemize}[label=--]
    \item (Condition~\ref{cond:disj-elim}) The full disjunction-elimination rule applies, not only when it has atomic conclusions.
    \item (Condition~\ref{cond:atomic}) If there is a \(\base{B}\)-valid argument from atomic assumptions to an atomic conclusion, then there is a \(\base{B}\)-derivation from those assumptions to the conclusion.
    \item (Condition~\ref{cond:comp}) A formula is not required as an assumption if there is a \(\base{B}\)-valid argument with the same open assumptions except the formula.
    \item (Condition~\ref{cond:intros}) The introduction rules construct \(\base{B}\)-valid arguments.
    \item (Condition~\ref{cond:bot}) There is a \(\base{B}\)-valid argument concluding \(\bot\) iff there are \(\base{B}\)-valid arguments concluding \(p\) for any atom \(p \in \setAtoms\).
    \item (Condition~\ref{cond:inf}) The standard dogma of semantics --- that is, the transmission view of consequence --- applies to \(\base{B}\)-validity.
\end{itemize}

We call a set of reductions satisfying these conditions \emph{supportive}. Relative to such reductions, entailment (\(\text{P-tV}\)) and support (\(\text{B-eS}\)) coincide. This is the content of Theorem~\ref{thm:ptv-bes-equiv}, below.

\begin{Lemma} \label{lem:reds}
If the set of reductions is supportive, the following hold:
\begin{align}
\Gamma \entails_\base{B} \phi_1 \land \phi_2 &\quad \text{iff} \quad \Gamma \entails_\base{B} \phi_1 \text{ and } \Gamma \entails_\base{B} \phi_2 \tag{$\land$-cond} \label{cond:conj}\\
\chi_1 \lor \chi_2, \Gamma \entails_\base{B} \phi \hspace{9.5mm} &\quad \text{iff} \quad \chi_1, \Gamma \entails_\base{B} \phi \text{ and } \chi_2, \Gamma \entails_\base{B} \phi \tag{$\lor$-cond} \label{cond:disj} \\
 \Gamma \entails_\base{B} \phi \to \psi \hspace{2mm} &\quad \text{iff} \quad \phi, \Gamma \entails_\base{B} \psi \tag{$\to$-cond} \label{cond:imp}
\end{align}
\end{Lemma}

We demonstrate only one illustrative case, the others being similar.

\begin{proof}[Proof of \ref{cond:disj}]
First, we show that \(\chi_1 \lor \chi_2, \Gamma \entails_\base{B} \phi\) implies \(\chi_1, \Gamma \entails_\base{B} \phi\) and \(\chi_2, \Gamma \entails_\base{B} \phi\). Let \(\argument{A}\) be a \(\base{B}\)-valid argument witnessing \(\chi_1 \lor \chi_2, \Gamma \entails_\base{B} \phi\). If \(\chi_1 \lor \chi_2\) does not occur in \(\argument{A}\), then the proposition holds trivially. If \(\chi_1 \lor \chi_2\) does occur in \(\argument{A}\), consider the following argument \(\argument{C}_i\) for \(i \in \{1,2\}\):
\[
\ddeduce{\phi}{\argument{A}}{\ddeduce{\chi_1 \lor \chi_2}{\quad \argument{B}_i}{\quad \chi_i}, \Gamma}
\]
where \(\argument{B}_i\) is simply the use of \(\irn \lor\). Observe that \(\argument{B}_i\) is \(\base{B}\)-valid because of Condition~\ref{cond:intros}. Hence, since \(\argument{A}\) and \(\argument{B}\) are \(\base{B}\)-valid, by \ref{cond:comp}, we infer that \(\argument{C}_i\) is \(\base{B}\)-valid for \(i \in \{1,2\}\).

Second, we show that \(\chi_1, \Gamma \entails_\base{B} \phi\) and \(\chi_2, \Gamma \entails_\base{B} \phi\), together imply \(\chi_1 \lor \chi_2, \Gamma \entails_\base{B} \phi\). Let $\argument{A}_1$ and $\argument{A}_2$ be valid arguments for the assumptions. Let $\argument{B}$ be as follows:
\[
\infer{\phi}{\chi_1\lor \chi_2 & \ddeduce{\phi}{\argument{A}_1}{[\chi_1],\Gamma}
&
\ddeduce{\phi}{\argument{A}_2}{[\chi_2],\Gamma}
}
\]
By Condition~\ref{cond:disj-elim}, argument $\argument{B}$ is $\baseB$-valid argument. It witnesses the desired conclusion. 
\end{proof}

This Lemma simplifies the presentation of the argument that proof-theoretic validity is equivalent to support; that is, the proof of Theorem~\ref{thm:ptv-bes-equiv}.

\begin{Proposition}[Monotonicity in P-tV] \label{prop:monotonicity-ptv}
If \(\Gamma \entails_{\base{B}} \chi\) and \(\base{C} \supseteq \base{B}\), then \(\Gamma \entails_{\base{C}} \chi\).
\end{Proposition}
\begin{proof}
This follows immediately from Definition~\ref{def:bvalid} by the monotonicity of derivability in a base --- that is, \(\proves_{\base{B}} \at{p}\) implies \(\proves_{\base{C}} \at{p}\) for any \(\base{C} \supseteq \base{B}\).
\end{proof}

\begin{Proposition}
[Monotonicity in B-eS] \label{prop:monotonicity-bes}
If \(\Gamma \supp_{\base{B}} \chi\) and \(\base{C} \supseteq \base{B}\), then \(\Gamma \supp_{\base{C}} \chi\).
\end{Proposition}
\begin{proof}
This was shown by Sandqvist~\cite{Sandqvist2015IL}.
\end{proof}

We regard these propositions as sufficiently basic statements that we may apply them without explicit reference.

\begin{Theorem}\label{thm:ptv-bes-equiv} 
Assuming the set of reductions for arguments is supportive, 
\[
\mbox{
$\Gamma \entails_{\base{B}} \phi$ \qquad iff \qquad  $\Gamma \supp_{\base{B}} \phi$ 
}
\]
\end{Theorem}
\begin{proof}
 We proceed by induction on the multiset ordering induced by the ordering on the logical size of formulas (i.e., the number of logical constants they contain):
    \begin{itemize}[label=--]
        \item \textsc{Base Case.} We have $\Gamma \cup \{\phi\}  \subseteq \setAtoms$. We reason as follows:
        \begin{align}
            \Gamma \entails_{\base{B}} \phi & \qquad \mbox{iff} \qquad \Gamma \proves_{\base{B}} \phi \tag{Cond.~\ref{cond:atomic}} \\
            & \qquad \mbox{iff} \qquad \Gamma \supp_{\base{B}} \phi \tag{Lemma~\ref{lem:sand-base}} 
        \end{align}
        \item \textsc{Inductive Step.} There is a non-atomic $\chi \in \Gamma\cup \{\phi\}$. We distinguish case when $\chi$ is on the right and on the left, $\chi = \phi$ and $\chi \in \Gamma$. Certain steps are immediate consequence of unpacking Definition~\ref{def:supp}, they have been labelled by the definition without further elaboration. Throughout, we label the use of the \emph{induction hypothesis} by `IH'. \smallskip

        \noindent Let $\chi = \phi$. We proceed by case analysis on the structure of $\phi$:
        \begin{itemize}[label=--]
            \item $\phi = \bot$. We reason as follows:
            \begin{align}
                \Gamma \entails_{\base{B}} \bot & \qquad \mbox{iff} \qquad \mbox{$\Gamma \entails_{\base{B}} p$ for all $p \in \setAtoms$} \tag{Cond.~\ref{cond:bot}} \\ & \qquad \mbox{iff} \qquad 
                \mbox{$\Gamma \supp_{\base{B}} p$ for all $p \in \setAtoms$} \tag{IH} \\
                & \qquad \mbox{iff} \qquad 
 \mbox{$\Gamma \supp_{\base{B}} \bot$} \tag{Def.~\ref{def:supp}} 
            \end{align}
            
            \item $\phi = \phi_1 \land \phi_2$. We reason as follows:
           \begin{align}
               \Gamma \entails_{\base{B}} \phi_1\land\phi_2  & \qquad \mbox{iff} \qquad \Gamma \entails_{\base{B}} \phi_1 \mbox{ and } \Gamma \entails_{\base{B}} \phi_2 \tag{Lemma~\ref{lem:reds}} \\
                 & \qquad \mbox{iff} \qquad  \Gamma \supp_{\base{B}} \phi_1 \mbox{ and } \Gamma \supp_{\base{B}} \phi_2 \tag{IH}\\
                 & \qquad \mbox{iff} \qquad  \Gamma \supp_{\base{B}} \phi_1 \land \phi_2 \tag{Def.~\ref{def:supp}}
            \end{align}

            \item $\phi = \phi_1 \lor \phi_2$.  We reason as follows:
              \begin{align}
               \Gamma \entails_{\base{B}} \phi_1\lor \phi_2  & \quad \mbox{iff} \quad \mbox{$\forall \base{C} \supseteq \base{B}\, \forall p \in \setAtoms$,} \notag \\ & \hspace{3em} \mbox{ if $\Gamma, \phi_1 \lor \phi_2 \entails_{\base{B}} p$, then $\Gamma \entails_{\base{B}} p$} \tag{Cond.~\ref{cond:comp}} \\
               & \quad \mbox{iff} \quad \mbox{$\forall \base{C} \supseteq \base{B}\, \forall p \in \setAtoms$,} \notag \\ & \hspace{3em} \mbox{ if $\Gamma, \phi_1 \entails_{\base{B}} p$ and $\Gamma, \phi_2 \entails_{\base{B}} p$, then $\Gamma \entails_{\base{B}} p$} \tag{Lemma~\ref{lem:reds}} \\
                 & \quad \mbox{iff} \quad \mbox{$\forall \base{C} \supseteq \base{B}\, \forall p \in \setAtoms$,} \notag \\ & \hspace{3em} \mbox{ if $\Gamma, \phi_1  \supp_{\base{B}} p$ and $\Gamma, \phi_2 \supp_{\base{B}} p$, then $\Gamma \supp_{\base{B}} p$} \tag{IH} \\
                 & \quad \mbox{iff} \quad \mbox{$\Gamma \supp_{\base{B}} \phi_1 \lor \phi_2$} \tag{Def.~\ref{def:supp}}
            \end{align}
        \item $\chi = \phi_1 \to \phi_2$. We reason as follows:
        \begin{align}
               \Gamma \entails_{\base{B}} \phi_1\to \phi_2  & \quad \mbox{iff} \quad  \Gamma, \phi_1 \entails_{\base{B}}  \phi_2 \tag{Lemma~\ref{lem:reds}} \\
               & \quad \mbox{iff} \quad
               \mbox{$\forall \base{C} \supseteq \base{B}$, if $\entails_{\base{C}} \psi$ for $\psi \in \Gamma, \phi_1$, then $\entails_{\base{C}} \phi_2$} \tag{Cond.~\ref{cond:comp}} \\
                 & \quad \mbox{iff} \quad
               \mbox{$\forall \base{C} \supseteq \base{B}$, if $\supp_{\base{C}} \psi$ for $\psi \in \Gamma, \phi_1$, then $\supp_{\base{C}} \phi_2$}  \tag{IH} \\
                 & \quad \mbox{iff} \quad \mbox{$\Gamma \supp_{\base{B}} \phi_1 \to \phi_2$} \tag{Def.~\ref{def:supp}}
            \end{align}
        \end{itemize}
        
        \noindent This completes the case analysis. \smallskip

       \noindent Let $\chi \in \Gamma$. That is, we have $\chi,\Delta \entails_{\base{B}} \phi$ for some set $\Delta$. We proceed by case analysis on the structure of $\chi$:
        
        \begin{itemize}[label=--]
             \item $\chi = \chi_1 \land \chi_2$. We reason as follows:
   \begin{align}
               \chi_1 \land \chi_2, \Delta \entails_{\base{B}} \phi  & \mbox{ iff }  \mbox{$\forall \base{C} \supseteq \base{B}$, if $\entails_{\base{C}} \psi$ for $\psi \in \Delta$ and $\entails_{\base{C}} \chi_1 \land \chi_2$, then $ \entails_{\base{B}} \phi$} \tag{Lemma~\ref{lem:reds}} \\
               & \mbox{ iff } \mbox{$\forall \base{C} \supseteq \base{B}$, if $\entails_{\base{C}} \psi$ for $\psi \in \Delta \cup \{\chi_1,\chi_2\}$, then $ \entails_{\base{B}} \phi$} \tag{Lemma~\ref{lem:reds}} \\
                 & \mbox{ iff } \mbox{$\forall \base{C} \supseteq \base{B}$, if $\supp_{\base{C}} \psi$ for $\psi \in \Delta \cup \{\chi_1,\chi_2\}$, then $\supp_{\base{B}} \phi$} \tag{IH} \\
                 & \mbox{ iff } \mbox{$\forall \base{C} \supseteq \base{B}$, if $\supp_{\base{C}} \psi$ for $\psi \in \Delta$ and $ \supp_{\base{C}} \chi_1 \land \chi_2$, then $\supp_{\base{B}} \phi$} \tag{Def.~\ref{def:supp}} \\
                 & \mbox{ iff } \mbox{$\chi_1\to\chi_2,\Delta \supp_{\base{B}} \phi$} \tag{Def.~\ref{def:supp}}
            \end{align}
             \item $\chi = \chi_1 \lor \chi_2$.  We reason as follows:
                \begin{align}
               \chi_1 \lor \chi_2, \Delta \entails_{\base{B}} \phi  & \quad \mbox{iff} \quad  \chi_1,\Delta \entails_{\base{B}} \phi \mbox{ and } \chi_2,\Delta \entails_{\base{B}} \phi  \tag{Lemma~\ref{lem:reds}} \\
               & \quad \mbox{iff} \quad  \chi_1,\Delta \supp_{\base{B}} \phi \mbox{ and } \chi_2,\Delta \supp_{\base{B}} \phi   \tag{IH} \\
                 & \quad \mbox{iff} \quad  \chi_1 \lor \chi_2,\Delta \supp_{\base{B}} \phi \tag{Def.~\ref{def:supp}}
            \end{align}
            \item $\chi  = \chi_1 \to \chi_2$. We reason as follows:
           \begin{align}
               \chi_1 \to \chi_2, \Delta \entails_{\base{B}} \phi  & \mbox{ iff }  \mbox{$\forall \base{C} \supseteq \base{B}$, if $\entails_{\base{C}} \psi$ for $\psi \in \Delta$ and $\entails_{\base{C}} \chi_1 \to \chi_2$, then $ \entails_{\base{B}} \phi$} \tag{Lemma~\ref{lem:reds}} \\
               & \mbox{ iff } \mbox{$\forall \base{C} \supseteq \base{B}$, if $\entails_{\base{C}} \psi$ for $\psi \in \Delta$ and $\chi_1 \entails_{\base{C}} \chi_2$, then $ \entails_{\base{B}} \phi$} \tag{Lemma~\ref{lem:reds}} \\
                 & \mbox{ iff } \mbox{$\forall \base{C} \supseteq \base{B}$, if $\supp_{\base{C}} \psi$ for $\psi \in \Delta$ and $\chi_1 \supp_{\base{C}} \chi_2$, then $\supp_{\base{B}} \phi$} \tag{IH} \\
                 & \mbox{ iff } \mbox{$\forall \base{C} \supseteq \base{B}$, if $\supp_{\base{C}} \psi$ for $\psi \in \Delta$ and $ \supp_{\base{C}} \chi_1 \to \chi_2$, then $\supp_{\base{B}} \phi$} \tag{Def.~\ref{def:supp}} \\
                 & \mbox{ iff } \mbox{$\chi_1\to\chi_2,\Delta \supp_{\base{B}} \phi$} \tag{Def.~\ref{def:supp}}
            \end{align}
    \end{itemize}
    \end{itemize}
    This completes the induction.
\end{proof}


A corollary is an affirmative answer to Prawitz's Conjecture for P-tV based on the elimination rules:

\begin{Corollary} \label{cor:completeness}
   Assuming the set of reductions for arguments is supportive and restricting to Sandqvist bases,
    \[
    \Gamma \entails \phi \qquad \mbox{iff} \qquad \Gamma \proves \phi
    \]
\end{Corollary}
\begin{proof}
We reason as follows:
\begin{align}
    \Gamma \entails \phi & \qquad \mbox{iff} \qquad  \mbox{$\Gamma \entails_{\base{B}} \phi$ for all $\base{B} \in \mathfrak{S}$.} \tag{Def.~\ref{def:bvalid}} \\
     & \qquad \mbox{iff} \qquad   \mbox{$\Gamma \supp_{\base{B}} \phi$ for all $\base{B} \in \mathfrak{S}$.} \tag{Theorem.~\ref{thm:ptv-bes-equiv}} \\
      & \qquad \mbox{iff} \qquad   \mbox{$\Gamma \supp \phi$} \tag{Def.~\ref{def:supp}}\\
      & \qquad \mbox{iff} \qquad   \mbox{$\Gamma \proves \phi$} \tag{Theorem~\ref{thm:sandqvist:snc}}
\end{align}
\end{proof}

We have given precise conditions under which P-tV and B-eS coincide. It remains to determine precise what sets of reductions are indeed supportive; as Schroeder-Heister~\cite{Schroeder2015elim} observes, the reduction used by Prawitz~\cite{Prawitz2006natural} do not suffice (see Section~\ref{sec:ptv}). We leave this to future work. 

\section{Ex Falso Quodlibet?} \label{sec:efq}

 In Section~\ref{sec:main}, we accepted the meaning of $\bot$ in terms of \emph{ex falso quodlibet} (\rn{EFQ}),
\[
\infer{\phi}{\bot}
\]
The section is motivated by the BHK interpretation of intuitionism (Section~\ref{sec:bhk}) in which
\[
\mbox{`nothing is a proof of $\bot$'} 
\] 
This causes an apparent conflict in this paper that requires some explanation. 

While realizability and proof-theoretic validity are deeply connected, they are not the same thing. The realizability interpretation takes place at an essentially classical meta-level, while proof-theoretic validity takes place at an essentially intuitionistic meta-level.  

What we mean when we say that realizability is classical is that this `if\ldots, then \ldots' in its clauses are classical. In the parlance of realizability, $\rn{EFQ}$ says the following: If there is a realizer for $\bot$, then there is a realizer for $\phi$.  Since the antecedent is false, the implication hold vacuously.

By contrast, when we say that proof-theoretic validity is intuitionistic, we mean that we have chosen the notion of validity to be such that: given a $\base{B}$-valid argument for $\bot$, one may construct a $\base{B}$-valid argument for $\phi$. For this condition to be a \emph{definition}, we apply a closed-world assumption in the form of \emph{definitional reflection} (DR). In this case, the expression of DR is somewhat different from the version given in Section~\ref{sec:intro}. Schroeder-Heister~\cite{Schroeder2015elim} has given a detailed account and observes that for the particular version of DR we use, our bias is  `to consider ``consequential'' clauses'  as definitions, rather than introduction. This is captured in Condition~\ref{cond:bot} (Section~\ref{sec:main}), which establishes \rn{EFQ} as the definition of $\bot$.

This distinction between realizability and proof-theoretic validity is why IPL is not `structurally' complete --- see, for example, Pogorzelski~\cite{Pogorzelski}. That is, while the horizontal bar in natural deduction corresponds to realizablity --- that is, the existence of realizers for the things above it guarantee the existence of the things below it --- the implication corresponds to proof-theoretic validity --- that is, there is a $\base{B}$-valid argument for $\phi \to \psi$ iff there is a $\base{B}$-valid argument from $\phi$ to $\psi$.

To end this section, we note that the remarks about the constructiveness of \rn{EFQ} being classically justified is not new.  Indeed, it is an essential part of the standard analysis on the relationship between \rn{EFQ} and the \emph{disjunctive syllogism} (\rn{DS}),
\[
\infer{\psi}{\phi \lor \psi & \neg \phi}
\]
--- see, for example, Johansson and Heyting \cite{derMolen} and Pereira et al.~\cite{Pereira2024}. How does the existence of valid arguments for $\phi \lor \psi$ and $\neg \phi$ neccesitate the existence of a valid argument for $\psi$? It is not that one is constructed from them, rather it is that in this situation, on the basis of the classical reasoning at the meta-level in which the arguments exist, we conclude that there must be a valid argument for $\psi$. The details are as described below.

Suppose one has $\base{B}$-valid arguments $\mathcal{D}_1$ and $\mathcal{D}_2$ for $\phi\lor \psi$ and $\neg \phi$, respectively. The existence of $\mathcal{D}_1$ suggests that there is a  $\base{B}$-valid argument $\mathcal{D}_1'$ for either $\phi$ or $\psi$. Suppose that it is $\mathcal{D}_1'$ is a $\base{B}$-valid argument for $\phi$, then using $\mathcal{D}_2$, we can construct a $\base{B}$-valid argument for $\bot$ using $\ern \to$,
\[
\infer[\ern \to]{\bot}{[\phi] & \deduce{\neg \phi}{\mathcal{D}_2}}
\]
However, there is no $\base{B}$-valid argument for $\bot$. Contradiction! Hence, we must reject our assumption, and the only remaining possibility is that $\mathcal{D}_1'$ is a $\base{B}$-valid argument for $\psi$ (not $\phi$). We thus conclude, by classical reasoning, from the existence of $\base{B}$-valid argument for $\phi \lor \psi$ and $\neg \phi$ that there is a $\base{B}$-valid argument for $\psi$ without constructing one.

This work on \rn{EFQ} admubrates the readings of $\bot$ by  Tennant~\cite{Tennant78entailment,Tennant2017core} and Fukuda and Igarashi~\cite{fukuda} in which it is a declaration about the \emph{state} of a construction. More generally, Berto~\cite{berto} has developed a reading of negation as a `modal' operator making a declaration about the set of accessible worlds.

\section{Conclusion}
\label{sec:conclusion}

Proof-theoretic semantics is the approach to meaning based on \emph{proof} (as opposed to \emph{truth}). There are two broad approaches to it in the literature: proof-theoretic validity (P-tV) and base-extension semantics (B-eS). The former is a semantics of arguments, and the latter is a semantics of a logic \emph{in terms of} arguments. Heuristically, P-tV provides a semantics by taking a sequent as valid iff it admits a valid argument. In this paper, we demonstrate that a certain version of P-tV provided by Prawitz~\cite{Prawitz1971ideas} (see also Schroeder-Heister~\cite{Schroeder2015elim}) contains the same semantic content as the B-eS for IPL provided by Sandqvist
\cite{Sandqvist2015IL}. This explains why this B-eS is complete.

To make the connection between P-tV and the B-eS of IPL, the paper considers carefully the notions of \emph{reduction} and \emph{base} that capture the `constructive' content of intuitionistic proof. This follows from the BHK interpretation of intuitionism (see Section~\ref{sec:background}). This, of course, raises the question of whether or no the other logics can be similarly captured. For example, there are B-eS for a variety of modal~\cite{Eckhardt} and intuitionistic substructural logics~\cite{IMLL,BI} relative to which the kind of analysis in this paper could be performed. Thus, future work includes extending the analysis herein to these domains and thereby understanding the consequential reading and constructive content of these logics.  

On this note, we may particularly ask for a truly consequential reading of \emph{classical} entailment. According to Dummett~\cite{Dummett1978-DUMTJO}:
\begin{quotation}
    In the resolution of the conflict between  these two views [the truth-theoretic reading of classical connectives, and the demand that it be explained without recourse to the principle of \emph{bivalence}] lies, as I see it, one of the most fundamental and intractable problems in the theory of meaning; indeed in all philosophy. 
\end{quotation}
Sandqvist~\cite{Sandqvist2009CL} addressed this with a B-eS akin to the one in this paper. This work takes $\to$ and $\bot$ as the only primitive connectives and provides the same B-eS for IPL, but with atomic systems restricted to rules without discharge (other choices are also possible — see Sandqvist~\cite{sandqvistwld,sandqvist2015hypothesis}). This provided an anchor relative to which one can investigate classical logic.

In parallel, Gheorghiu and Pym~\cite{gheorghiu2023constraints} have shown that the key factor driving the proof-theoretic semantic distinction between intuitionistic and classical logic lies in the interpretation of disjunction (cf. Dummett~\cite{dummett2000elements}). Moreover, just as intuitionistic logic corresponds to the (simply typed) $\lambda$-calculus as a canonical instantiation of the realizability (i.e., BHK) interpretation, classical logic corresponds to the $\lambda\mu$-calculus (see Parigot~\cite{Parigot2005}). In this context, Pym and Ritter~\cite{PR2001disj} have shown that one can give two natural interpretations of disjunction, both of which are constructive, through $\lambda\mu$-terms, one corresponding to intuitionistic disjunction and the other corresponding to classical disjunction. Investigating the concept of proof-theoretic validity for classical logic relative to these findings remains future work.

\subsection*{Acknowledgments}

We are grateful to the reviewers on an earlier edition of this article for their thorough and thoughtful comments on this work.

\subsection*{Declaration}

This work has been partially supported by the UK EPSRC grants EP/S013008/1 and EP/R006865/1, and by Gheorghiu's EPSRC Doctoral Studentship.

\bibliographystyle{siam}
\bibliography{bib}

\end{document}